\newtheorem{theorem}{Theorem}[section]
\newtheorem{lemma}[theorem]{Lemma}
\newenvironment{proof}{\textbf{Proof:}}{\hfill$\square$}
\newcommand{\polylog}{\mathrm{polylog}}
\title{The Maximum Clique Problem in a Disk Graph Made Easy} 
\author[1]{J. Mark Keil}
\author[1]{Debajyoti Mondal}
\affil[1]{Department of Computer Science, University of Saskatchewan, Saskatoon,  Canada} 
\date{}
\begin{document}

\maketitle

\begin{abstract}
A disk graph is an intersection graph of disks in $\mathbb{R}^2$. Determining the computational complexity of finding a maximum clique in a disk graph is a long-standing open problem.  In 1990, Clark, Colbourn, and Johnson gave a polynomial-time algorithm for computing a maximum clique in a unit disk graph. However, finding a maximum clique when disks are of arbitrary size is widely believed to be a challenging open problem. The problem is open even if we restrict the disks to have at most two different sizes of radii, or restrict the radii to be within $[1,1+\varepsilon]$ for some $\epsilon>0$. 
In this paper, we provide a new perspective to examine adjacencies in a disk graph that helps obtain the following results. 

\begin{enumerate}
    \item[-] We design an $O(n^{2k} poly(n))$-time algorithm to find a maximum clique in a $n$-vertex disk graph with $k$ different sizes of radii. This is polynomial for every fixed $k$, and thus settles the open question for the case when $k=2$.
    \item[-]  Given a set of $n$ unit disks, we show how to compute a maximum clique inside each possible axis-aligned rectangle determined by the disk centers in $O(n^5\log n)$-time. This is at least a factor of $n^{4/3}$ faster than applying the fastest known algorithm for finding a maximum clique in a unit disk graph for each rectangle independently.  
    \item[-] We give an $O(n^{2rk} poly(n,r))$-time algorithm to find a maximum clique in a $n$-vertex ball graph with $k$ different sizes of radii where the ball centers lie on $r$ parallel planes. This is polynomial for every fixed $k$ and $r$, and thus contrasts the previously known NP-hardness result for finding a maximum clique in an arbitrary ball graph.
\end{enumerate}

% k types are only in a cliqe 2^k power set
% points on t planes
% we can do all possible rectangle query by inserting points into guessed slab
% B(n) is output sensitive
% fix n^2k
% the slab for larger ones should enclose the smaller ones. If a larger one is inside a smaller ones slab then we can shrink

\end{abstract}

\section{Introduction}
 \label{sec:intro}
A \emph{geometric intersection graph} consists of a set of geometric objects as vertices and a set of edges that are determined by the intersection of these objects, i.e., two vertices are considered to be adjacent if and only if the corresponding objects intersect. Geometric objects of different shapes and their intersection graphs are often used to model applied contexts, e.g., unit disks to model problems in wireless networks~\cite{DBLP:conf/waoa/Fishkin03,huson1995broadcast}, line segments (time intervals) to model task scheduling~\cite{Kleinberg+Tardos:06a}, trapezoids to model channel routing in VLSI design~\cite{DBLP:journals/dam/FelsnerMW97}, and so on. Many NP-hard graph problems are known to admit polynomial-time solutions on various types of geometric intersection graphs (see e.g.,~\cite{DBLP:journals/comgeo/BoseCKM0MS22,DBLP:journals/dm/ClarkCJ90}). There also exist cases where a graph problem is known to be NP-hard, but its time complexity status is open for some intersection graph class (see e.g. the open problems posed in~\cite{DBLP:journals/dcg/AdhikaryBMR23,DBLP:journals/jacm/BonamyBBCGKRST21,DBLP:journals/corr/abs-2107-05198}). In this paper, we restrict our attention to one such scenario, where the problem of interest is finding a \emph{maximum clique}, i.e., a maximum subset of pairwise adjacent vertices, and the intersection graph we examine is a \emph{disk graph}. Here, a \emph{disk graph} $G$ is an intersection graph of disks in $\mathbb{R}^2$, where each vertex of $G$ corresponds to a disk and two vertices are adjacent in $G$ if and only if their disks intersect. %Throughout the paper we consider disk graphs with at most two different radii. We will refer to the small radius as $r_s$ and the larger radius as $r_b$. A disk with radius $r_s$ ($r_b$)  will be referred to as a \emph{small} (\emph{big}) disk.

The maximum clique problem is NP-hard for many well-known intersection graph classes such as intersection graph  of rays~\cite{DBLP:journals/dcg/CabelloCL13}, grounded strings~\cite{DBLP:journals/corr/abs-2107-05198}, triangles~\cite{ambuhl2005clique}, ellipses~\cite{ambuhl2005clique}, and a combination of axis-aligned rectangles and unit disks~\cite{bonnet_et_al:LIPIcs:2020:13258,DBLP:conf/compgeom/EspenantKM23}, whereas polynomial-time solvable for circle graphs~\cite{Nash10}, trapezoid graphs~\cite{DBLP:journals/dam/FelsnerMW97},
circle trapezoid graphs~\cite{DBLP:journals/dam/FelsnerMW97}, unit disk graphs~\cite{DBLP:journals/dm/ClarkCJ90},  axis-aligned rectangle intersection graphs~\cite{DBLP:journals/jal/ImaiA83}, and so on. However, determining the time complexity of computing a maximum clique in a disk graph is a long-standing open question~\cite{ambuhl2005clique,BRS06,DBLP:conf/waoa/Fishkin03}. In a seminal paper published in 1990, Clark, Colbourn, and Johnson~\cite{DBLP:journals/dm/ClarkCJ90}  gave a polynomial-time algorithm for the case of unit disk graphs. However, the time-complexity question for general disk graphs has remained open since then. Although this was not posed as an open problem in~\cite{DBLP:journals/dm/ClarkCJ90}, it has long been known to be a challenging problem~\cite{BRS06}, even before it was explicitly posed as an open problem  (e.g., by Fishkin~\cite{DBLP:conf/waoa/Fishkin03}, Amb\"{u}hl and Wagner~\cite{ambuhl2005clique} and Cabello~\cite{c1,c2}). 
%For ball graphs, the problem is NP-hard, as shown by Bonamy et al. (FOCS '18). 
 In fact, the problem %of finding a maximum clique in a disk graph %of extending the unit-disk results in a more general setting 
 has been emphasized in various ways in the literature, e.g., as ``an intriguing open question'', %whether or not tractability can be extended to general disk graphs''~\cite{DBLP:conf/compgeom/BonnetG0RS18}, 
 as ``a notorious open question in computational geometry''~\cite{DBLP:journals/jacm/BonamyBBCGKRST21}, as being ``elusive with no new positive or negative results''~\cite{BRS06},  and several times as a ``long-standing open problem''~\cite{DBLP:conf/compgeom/EspenantKM23,DBLP:journals/jco/Grelier22,DBLP:phd/basesearch/Grelier22}.

Although no polynomial-time exact solution is known for finding a maximum clique in general disk graphs, a 2-approximate solution~\cite{ambuhl2005clique} can be obtained by computing a stabbing set of four points. Such a stabbing set, i.e., four points in the plane that hit all the given disks, is known to exist~\cite{danzer1986losung,stacho1981gallai} and can be computed in linear time~\cite{DBLP:journals/dcg/CarmiKM23}. However, obtaining a better approximation ratio appears challenging. Cabello~\cite{c1,c2} even asked whether a 1.99 approximation can be achieved if we restrict the input to disks that have at most two different sizes of radii.

Several recent research has shown interesting progress on various fronts of the problem. Clark et al.'s $O(n^{4.5})$-time algorithm for finding a maximum clique in a unit disk graph~\cite{DBLP:journals/dm/ClarkCJ90}, which was improved first to $O(n^{3.5}\log n)$-time~\cite{breu} and then to $O(n^{3}\log n)$-time~\cite{DBLP:conf/wg/Eppstein09}, has been improved further in SoCG'23 to $O(n^{2.5}\log n)$~\cite{DBLP:conf/compgeom/EspenantKM23}. Chan~\cite{chan} observed that the running time of~\cite{DBLP:conf/compgeom/EspenantKM23} can be expressed as $O(n \cdot h(m)^{1+o(1)})$, where $h(m)$ is the time to compute a maximum matching in a bipartite graph that has a biclique cover\footnote{A \emph{biclique cover} of a bipartite graph $G$ is a set of complete bipartite subgraphs of $G$ that cover the edges of $G$. The size of the cover is the number of elements in the set.} of size $m$, and  $m$ in this context is known to be $O(n^{4/3} \polylog(n))$~\cite{DBLP:journals/siamcomp/KatzS97}. One can leverage such a biclique cover to compute a maximum matching in $O(m^{1+o(1)})$ time~\cite{DBLP:journals/jcss/FederM95}. Therefore, the overall running time becomes $O(n^{7/3+o(1)})$. In FOCS'18, Bonamy et al.~\cite{DBLP:conf/focs/BonamyBBCT18} gave a QPTAS, a randomized EPTAS (and thus a fixed-parameter-tractable algorithm~\cite{bonnet_et_al:LIPIcs:2020:13258}), and a subexponential algorithm for computing a maximum clique in a disk graph. They also showed how to extend these results for \emph{unit ball graphs}, which are intersection graphs of 3-dimensional balls of unit radius. In SODA'22, Lokshtanov et al.~\cite{DBLP:conf/soda/LokshtanovPSXZ22} designed further subexponential-time fixed-parameter-tractable algorithms for many other fundamental graph problems on disk graphs. While the maximum clique problem is open for disk graphs with arbitrary radii, it is NP-hard for ball graphs~\cite{DBLP:conf/focs/BonamyBBCT18}. The hardness result holds in a very restricted setting with all radii falling within the interval  $[1,1+\epsilon]$,  $\epsilon>0$, where even a subexponential-time approximation scheme is unlikely unless the exponential-time-hypothesis conjecture~\cite{DBLP:conf/focs/ImpagliazzoPZ98} fails.

\subsection*{Our Contribution}

We show that a maximum clique in a unit disk graph can be obtained by examining \emph{slabs} (i.e., regions that are bounded by two parallel lines), whereas all prior algorithms depend on a more constrained lens-shaped region. We show how our new slab-based idea can be extended to compute a maximum clique in polynomial time when the number of different radii sizes is fixed, and even to address the case of ball graphs in some restricted settings.  Specifically, we obtain the following results, where in all cases, we assume that an arrangement of disks, i.e., a disk representation of the disk graph, is given as an input.

\smallskip 
\textbf{Disk graph with $k$ Different Sizes of Radii}:
We give an $O(n^{2k} (f(n)+n^2))$-time algorithm to find a maximum clique in a disk graph where $k$ is the number of different types of radii and $f(n)$ is the time to compute a maximum matching in a $n$-vertex bipartite graph. Since $f(n)$ is polynomial in $n$~\cite{DBLP:journals/siamcomp/HopcroftK73}, for every fixed $k$, the running becomes polynomial. This settles the open question posed by Cabello~\cite{c1,c2} on whether a polynomial-time algorithm exists when $k=2$.
 
\smallskip 
\textbf{Range Query in a Unit disk graph}: We show that our slab-based idea can help the range query version of the maximum clique problem, where the input unit disks should be preprocessed such that several queries that may appear at a later time can be answered efficiently. We examine the case of \emph{axis-aligned rectangular query}, where a maximum clique needs to be reported over disks with centers in the query rectangle. 
%We show that our `slab'-idea gives a straightforward $O(n^{7/3+o(1)}\polylog(n))$-time algorithm for computing a maximum clique in a unit disk graph. The slab structure can also help the range query version of the maximum clique problem, where a maximum clique needs to be reported for a rectangular query. For example, 
A natural approach for handling rectangular range queries is to %   Since our technique creates the subproblems as slabs instead of we can 
precompute the maximum cliques for all possible axis-aligned rectangles (determined by the given disk centers)  such that given a query $R$, a solution can be obtained first by identifying the smallest enclosing rectangle $R'$ for the disk centers in $R$, and then by a table look-up using $R'$. Applying an existing lens-based algorithm on all $O(n^4)$ possible axis-aligned rectangles determined by the input disk centers takes $O(n^{19/3+o(1)})$ time.  In contrast, we can use our slab-based idea to precompute all such solutions in $O(n^5\log n)$, achieving a speed up by at least a factor of $n^{4/3}$.

\smallskip 
\textbf{Ball graphs with $k$ Different Sizes of Radii}: Given a set of $n$ balls in the Euclidean plane with $k$ different sizes of radii where the ball centers lie on $r$ parallel planes, we show how to compute a maximum clique in the corresponding ball graph in $O(n^{2rk} (f(n)+n^2r))$ time. Here $f(n)$ is the time to compute a maximum matching in a $n$-vertex bipartite graph. For fixed $k$ and $r$, the running time becomes polynomial in $n$. We then show that the restriction of the planes being parallel can be removed as long as the input planes are all perpendicular to a different plane. This result contrasts the known NP-hardness result~\cite{DBLP:journals/jacm/BonamyBBCGKRST21} for finding a maximum clique in a ball graph where the radii of the balls are very close to each other, i.e., in the interval  $[1,1+\varepsilon]$, where $\varepsilon>0$.

\section{Preliminaries}
In this section we discuss some notation and preliminary results.

By $D_{a,r}$, we denote a disk with a center at point $a$ and radius $r$. Given a pair of disks $D_{a,r}$ and $D_{b,r'}$, we refer to their common intersection region as a \emph{lens} (Figure~\ref{fig:u}(a)). % and denote it by $L(D_{a,r}, D_{b,r'})$. The \emph{width} of $L(D_{a,r}, D_{b,r'})$ is the length of the line segment determined by the intersection of $\ell_{ab}$ and $L(D_{a,r}, D_{b,r'})$. By $dist(D_{a,r},D_{b,r'})$ we denote the Euclidean distance between $a$ and $b$.  
Let $G$ be a disk graph. For simplicity, sometimes we say a pair of disks are adjacent in $G$, which means that the disks intersect and their corresponding vertices are adjacent in $G$.  By $B_{a,r}$, we denote a ball with a center at point $a$ and radius $r$.
 
\begin{figure}[h]
    \centering
    \includegraphics[width=\textwidth]{"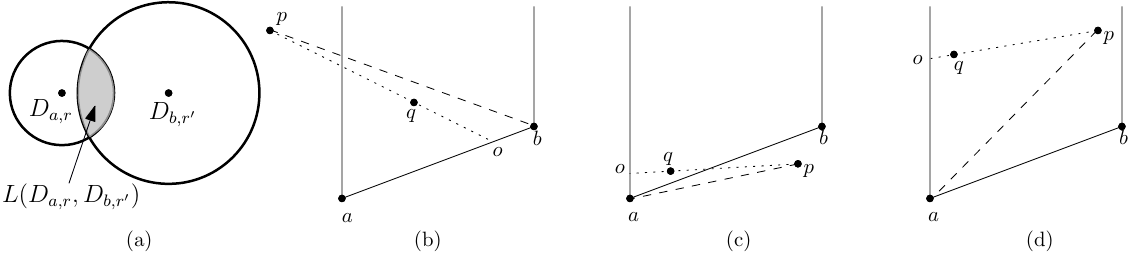"}
    \caption{ Illustration for (a) a lens, (b)--(d)  Illustration for Lemma~\ref{lem:u}. }
    \label{fig:u}
\end{figure}

For a pair of points $a$ and $b$, we %Let $a$ and $b$ be two points in $\mathbb{R}^2$. By 
denote the line passing through them as $\ell_{ab}$. By $ab$ and $|ab|$ we denote the % line passing through $a,b$ and the 
line segment with endpoints $a,b$ and the %, respectively. By $|ab|$ we denote the 
Euclidean distance between $a$ and $b$, respectively. By $\ell^h_{p}$ and $\ell^v_{p}$ we denote the horizontal and vertical lines through a point $p$, respectively. By an \emph{upper slab} $U_{ab}$ of $ab$, we denote the region above $ab$, which is bounded by the lines $\ell^v_{a}$ and $\ell^v_{b}$. Similarly, by a  \emph{lower slab} $\overline{U}_{ab}$ we denote the region below $ab$, which is bounded by $\ell^v_{a}$ and $\ell^v_{b}$.

%fig for u_ab

\begin{lemma}\label{lem:u}
Let $ab$ be a line segment and let $U_{ab}$ be the upper slab of $ab$. Let  $q$ be a point in $U_{ab}$ and let $p$ be a point (not necessarily in $U_{ab}$) with a $y$-coordinate equal to or larger than the $y$-coordinate of $q$. Then $|pq|\le
\max\{|pa|,|pb|\}$.
%If $\max\{|pa|,|pb|,|qa|,|qb|\} \le |ab|$, then we must have $|pq|\le |ab|$.  
\end{lemma}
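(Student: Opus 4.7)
The plan is to factor the desired inequality through an intermediate point $q'$ on segment $ab$. Specifically, I would let $q'$ be the (unique) point on segment $ab$ with the same $x$-coordinate as $q$; this is well-defined because the definition of $U_{ab}$ forces $q_x$ to lie in the horizontal extent of $ab$. The two sub-goals are then $|pq|\le |pq'|$ and $|pq'|\le \max\{|pa|,|pb|\}$, and chaining these gives the lemma.

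For the first sub-goal, I would exploit the fact that $q$ and $q'$ share an $x$-coordinate, so $|pq|^2$ and $|pq'|^2$ differ only in their $y$-squared terms. Because $q\in U_{ab}$ sits on or above segment $ab$, we have $q_y\ge q'_y$, and by hypothesis $p_y\ge q_y$, yielding the chain $p_y\ge q_y\ge q'_y$. Hence $0\le p_y-q_y\le p_y-q'_y$, and squaring preserves this ordering, so $|pq|\le |pq'|$. For the second sub-goal, I would appeal to convexity: the closed Euclidean disk centered at $p$ of radius $\max\{|pa|,|pb|\}$ is convex and contains both endpoints $a$ and $b$, hence contains the entire segment $ab$ and in particular the point $q'$. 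This immediately gives $|pq'|\le \max\{|pa|,|pb|\}$.

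I do not anticipate a real obstacle here: the lemma is a clean two-step geometric inequality, and the only creative ingredient is realizing that the correct intermediate point is the vertical projection of $q$ onto $ab$, rather than, say, its horizontal projection onto $\ell^h_p$. A minor sanity check worth recording is the degenerate case in which $a$ and $b$ share an $x$-coordinate; there $U_{ab}$ collapses onto a vertical ray, any point of $ab$ can serve as $q'$, and the same argument goes through verbatim.
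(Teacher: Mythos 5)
Your proof is correct, but it takes a genuinely different route from the paper's. The paper shoots a ray from $p$ through $q$, identifies the exit point $o$ on the boundary of $U_{ab}$, and then argues monotonically while sliding $o$ along whichever boundary piece it landed on (the segment $ab$ or one of the two vertical sides), which forces a case split on the shape of the boundary. You instead factor through the vertical projection $q'$ of $q$ onto the segment $ab$, and dispatch the two pieces by (i) a one-line coordinate comparison using $p_y \ge q_y \ge q'_y$ and the shared $x$-coordinate of $q$ and $q'$, and (ii) convexity of the disk $D\bigl(p,\max\{|pa|,|pb|\}\bigr)$, which contains $ab$ because it contains $a$ and $b$. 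Your decomposition is shorter and more uniform: the disk-convexity step absorbs all of the paper's ``move $o$ along $\ell_{ab}$'' reasoning, and the coordinate comparison absorbs the ``move $o$ downward along a vertical side'' reasoning, with the only remaining wrinkle being the degenerate vertical-segment case, which you correctly note is handled by the same argument (any point of $ab$ works as $q'$ there). The paper's ray-tracing picture is perhaps more visual, but your argument is the more elementary and arguably the more robust of the two.
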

\begin{proof}
We distinguish two scenarios depending on whether $ab$ is vertical or not. 

\textit{Scenario 1 (The $x$-coordinates of $a$ and $b$ are different):} Without loss of generality assume that  $a$ has a strictly smaller $x$-coordinate than that of $b$. Since the $y$ coordinate of $p$ is equal to or larger than the $y$-coordinate of $q$, the ray that starts at $p$ and passes through $q$ must hit the boundary of $U_{ab}$. Let $o$ be the point of intersection when the ray exits $U_{ab}$ (e.g., Figure~\ref{fig:u}(b)--(d)). It suffices to show that $|po|\le \max\{|pa|,|pb|\}$. 

Consider first the case when $o$ lies on $ab$ (e.g., Figure~\ref{fig:u}(b)). Let $m$ be the point on line $\ell_{ab}$ that minimizes the distance $|pm|$, i.e., $\ell_{pm}$ is perpendicular to $\ell_{ab}$. We now move $o$ away from $m$ along line $\ell_{ab}$ until it hits either $a$ or $b$. Since we are moving $o$ away from $m$, the length $|po|$ increases monotonically, and the largest length it can attain is $\max\{|pa|,|pb|\}$. 
 
Consider now the case when $o$ lies on the vertical sides of $U_{ab}$ (e.g., Figure~\ref{fig:u}(c)--(d)). We move $o$ downward along the side of $U_{ab}$ until it hits either $a$ or $b$. Since $p$ has a higher $y$-coordinate than $q$, moving $o$ downward increases the length $|po|$ monotonically, and the largest length it can attain is $\max\{|pa|,|pb|\}$.

\textit{Scenario 2 (The $x$-coordinates of $a$ and $b$ are the same):}
In this case, $ab$ is a vertical segment. If $a$ has a smaller $y$-coordinate than that of $b$, then $|pq|\le |pa|$. Otherwise, $|pq|\le |pb|$.
\end{proof}

\section{Maximum Clique in Disk Graphs with $k$ Different  Radii}
\label{sec:kradii}
In this section we give an $O(n^{2k} (f(n)+n^2))$-time algorithm to find a maximum clique in a disk graph where $k$ is the number of different types of radii and $f(n)$ is the time to compute a maximum matching in a $n$-vertex bipartite graph. 

We first introduce some notation. Let $\mathcal{D}_k$ be a disk graph where the number of different types of radii is at most $k$. For convenience, we denote these different types of radii as $r_1,r_2,\ldots, r_k$, where for every $1\le i<j\le k$, we have $r_i< r_j$. By a %the
\emph{type-$i$ disk} we denote a disk of radius $r_i$. By $C_i$ we denote a clique that contains only disks of type $i$. Let $\mathcal{C}$ be a maximum clique of $\mathcal{D}_k$. By $\mathcal{C}_i$ we denote a maximal clique in $\mathcal{C}$ where all disks are of type $i$. 

The idea of the algorithm is as follows. We guess the number of different types of radii that may appear in a maximum clique and we take the maximum over all the solutions computed from these $2^k$ guesses. %For example, let $r_a,r_b,r_c,r_d$ be such a guess where $1\le a<b<c<d\le k$. The 
We now describe how a solution is computed for a particular guess for a set of radii that may appear in the maximum clique $\mathcal{C}$. For each disk type $i$, we guess two disk centers $a_i,b_i$ from $\mathcal{C}_i$, where $a_i$ is the leftmost (i.e., with the smallest $x$-coordinate) and $b_i$ is the rightmost (i.e., with the largest $x$-coordinate) over all the disk centers in $\mathcal{C}_i$. {\color{black}Note that  $a_i$ may coincide with $b_i$ if $\mathcal{C}_i$ contains only one disk, or if the centers of all disks in $\mathcal{C}_i$ are on a vertical line}. Let $\Psi$ be the set of these $2k$ disks.  For every upper slab $U_{a_ib_i}$, we construct a set $X_i$ by taking every disk that has its center in $U_{a_ib_i}$ and intersects all the disks of $\Psi$.  We show that the union of these disks, i.e., $X= (X_1\cup X_2\cup \ldots\cup X_k)$, is a clique in $\mathcal{D}_k$. Similarly, for each lower slab $\overline{U}_{a_ib_i}$, we construct a set $Y_i$ by taking every disk that has its center in $\overline{U}_{a_ib_i}$ and intersects all the disks of $\Psi$. We show that their union $Y=(Y_1\cup Y_2\cup \ldots\cup Y_k)$ is a clique in $\mathcal{D}_k$. Since the disks of $\Psi$ are assumed to be in $\mathcal{C}$, it is straightforward to observe that $\mathcal{C}$ is a subset of the disks in $(\Psi\cup X\cup Y)$. Since the complement of the disk graph determined by the disks $(X \cup Y)$ is a bipartite graph $H$,  we can compute the maximum clique $\mathcal{C}$ by from a maximum bipartite matching in $H$. 

It now suffices to show that the disks in $X$ (similarly, the disks in $Y$) are mutually adjacent. 

\begin{lemma}\label{lem:pairmagic}
    For every $i,j$, where $1\le i,j\le k$, the disks in $X_i\cup X_j$ are mutually adjacent. 
\end{lemma}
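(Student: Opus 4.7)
The plan is to reduce the adjacency of every pair of disks in $X_i \cup X_j$ to the distance inequality $|pq| \le r_i + r_j$, where $p, q$ are the centers of two arbitrary disks $D \in X_i$ and $D' \in X_j$. The single geometric tool is Lemma~\ref{lem:u}, applied to a carefully chosen upper slab.

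I would first dispatch the easy case $i = j$: both centers $p, q$ then lie in the same slab $U_{a_i b_i}$ and both disks have radius $r_i$. Assuming without loss of generality that $p$ has the larger $y$-coordinate, Lemma~\ref{lem:u} (with $q$ as the interior point) yields $|pq| \le \max\{|p a_i|, |p b_i|\}$, and since $D \in X_i$ intersects both type-$i$ disks of $\Psi$ at $a_i$ and $b_i$, each of these distances is at most $2 r_i = r_i + r_j$, giving the required adjacency.

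For $i \ne j$, let $D \in X_i$ have center $p$ and $D' \in X_j$ have center $q$. The critical decision is to apply Lemma~\ref{lem:u} to the slab whose defining segment corresponds to the type of the \emph{lower} of the two centers, so that the $y$-coordinate hypothesis of the lemma is automatically satisfied. If $y(q) \ge y(p)$, I would invoke the lemma on $U_{a_i b_i}$ (which contains $p$) with $q$ playing the outside role, obtaining $|pq| \le \max\{|q a_i|, |q b_i|\}$; since $D' \in X_j$ intersects the type-$i$ disks of $\Psi$ centered at $a_i, b_i$, each of these distances is bounded by $r_i + r_j$. The case $y(p) > y(q)$ is completely symmetric, applied instead to $U_{a_j b_j}$ and using that $D \in X_i$ intersects the type-$j$ disks of $\Psi$ at $a_j, b_j$.

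The main obstacle is purely bookkeeping: one must consistently pair the chosen slab with the $\Psi$-disks whose radii match that slab's type, so that the distance bound produced by Lemma~\ref{lem:u} is controlled by $r_i + r_j$ rather than something uncontrolled. Once the convention ``use the slab of the lower center'' is fixed, the higher center's membership in $X_i$ or $X_j$ immediately supplies the intersection guarantee against the two $\Psi$-disks anchoring that slab, and the lemma completes the argument without any further geometric work.
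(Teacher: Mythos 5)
Your proposal is correct and mirrors the paper's own proof essentially step for step: it uses the same case split on $i=j$ versus $i\ne j$, the same sub-split on which of $p,q$ is lower, the same application of Lemma~\ref{lem:u} to the slab containing the lower center, and the same appeal to intersection with the $\Psi$-disks to bound the resulting distance by $r_i+r_j$. The ``use the slab of the lower center'' convention you articulate is exactly what the paper does implicitly in its two sub-cases.
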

\begin{proof}
Let  $D_{p,r_i}$ and $D_{q,r_j}$ be two disks, where $p$ belongs to $X_i$ and $q$ belongs to $X_j$. We now show that $D_{p,r_i}$ and $D_{q,r_j}$ must mutually intersect.

Consider first the case when $i=j$. Without loss of generality assume that the $y$-coordinate of $p$ is larger than or equal to that of $q$. Then by Lemma~\ref{lem:u}, $|pq|\le \max\{|pa_i|,|pb_i|\}$. By the construction of $\Psi$, the disk $D_{p,r_i}$ intersects $D_{a_i,r_i}$ and $D_{b_i,r_i}$. We thus have  $|pq|\le \max\{|pa_i|,|pb_i|\} \le 2r_i$, and since $p$ and $q$ correspond to type-$i$ disks, they must intersect.  

Consider now the case when $i\not=j$. 

If the $y$-coordinate of $p$ is larger than or equal to that of $q$, then we apply Lemma~\ref{lem:u}  to obtain $|pq|\le \max\{|pa_j|,|pb_j|\}$.  By the construction of $\Psi$, the disk $D_{p,r_i}$ intersects $D_{a_j,r_j}$ and $D_{b_j,r_j}$. We thus have  $|pq|\le \max\{|pa_j|,|pb_j|\} \le r_i+r_j$. Consequently, $D_{p,r_i}$ and $D_{q,r_j}$ mutually intersect. 
 
If the $y$-coordinate of $p$ is smaller than that of $q$, then we apply Lemma~\ref{lem:u} by swapping the role of $p$ and $q$, i.e., by using the condition that $p\in U_{a_i,b_i}$ whereas $q$ has a larger $y$-coordinate than that of $p$. We thus have $|pq|\le \max\{|qa_i|,|qb_i|\}$. By the construction of $\Psi$, we have  $\max\{|qa_i|,|qb_i|\} \le r_i+r_j$. Consequently, $D_{p,r_i}$ and $D_{q,r_j}$ mutually intersect. 
\end{proof}

We now consider the running time. There are at most ${n\choose 0}+{n\choose 1}+{n\choose 2}=O(n^2)$ ways to guess the pair of disks for a particular disk type, irrespective of whether these two disks are distinct or not. Therefore, the number of ways we can guess $k$  pairs is %{n \choose 2}{n-2 \choose 2}\ldots{n-2(k-1) \choose 2}= 
$O(n^{2k})$. For each guess of $k$  pairs, it is straightforward to construct the sets $X$ and $Y$ in $O(kn)\in O(n^2)$ time, and the disk graph determined by the disks $(X \cup Y)$ in $O(n^2)$ time. Let $f(n)$ be the time for computing a maximum matching in a $n$-vertex bipartite graph.  Then the   running time is $O(n^{2k} (f(n)+n^2))$. %After taking the $2^k$ different guesses for radii sizes that may appear in a maximum clique into consideration, the overall running time becomes $O(2^k n^{2k} (f(n)+n^2))$.

The following theorem summarizes the result of this section.
%$=\frac{n!}{(n-2)!2!} \frac{(n-2)!}{(n-4)!2!}  \ldots $ $  \frac{(n-2(k-1))!}{(n-2k)!2!} = \frac{n!}{(n-2k)!2^k}= {n \choose 2k}\frac{(2k)!}{2^k} = O(k!n^{2k})$, i.e., $n^{O(k)}$ when $k$ is fixed.

\begin{theorem}
Given a set of $n$ disks in the Euclidean plane with $k$  different types of radii, a maximum clique in the corresponding disk graph can be computed 
   %Let $\mathcal{D}_k$ be a disk graph where the number of different types of radii is at most $k$. Then we can compute a maximum clique in $\mathcal{D}_k$ 
   in $O(n^{2k} (f(n)+n^2)) $ time. Here $f(n)$ is the time to compute a maximum matching in a $n$-vertex bipartite graph. 
\end{theorem}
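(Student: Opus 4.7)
My plan is to follow the approach outlined in the discussion preceding the theorem, reducing the problem through two nested guessing phases to a maximum matching in a bipartite graph, with correctness supplied by Lemma~\ref{lem:pairmagic}.

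First, I would enumerate all $2^k$ subsets $S\subseteq\{1,\ldots,k\}$ as guesses of which radii types are represented in a maximum clique $\mathcal{C}$, returning the largest clique produced across all guesses. For a fixed $S$, and for each $i\in S$, I would further guess an ordered pair $(a_i,b_i)$ of type-$i$ disk centers, intended to be the leftmost and rightmost centers of type $i$ appearing in $\mathcal{C}$. Since each such pair has $O(n^2)$ possibilities, the inner enumeration contributes $O(n^{2k})$ choices in total. Let $\Psi$ denote the set of at most $2k$ guessed disks. Given $\Psi$, I would construct, for each $i\in S$, the sets $X_i$ and $Y_i$ of type-$i$ disks whose centers lie in $U_{a_ib_i}$ and in $\overline{U}_{a_ib_i}$, respectively, and which intersect every disk of $\Psi$.

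For correctness, set $X=\bigcup_i X_i$ and $Y=\bigcup_i Y_i$. By Lemma~\ref{lem:pairmagic}, $X$ is a clique in $\mathcal{D}_k$, and the analogue of Lemma~\ref{lem:u} for lower slabs (obtained by reflecting the argument about the horizontal axis) yields the same conclusion for $Y$. When the guess of $\Psi$ is correct, the extremality of $a_i,b_i$ forces every disk of $\mathcal{C}$ of type $i$ to have its center in $U_{a_ib_i}\cup\overline{U}_{a_ib_i}$, and by definition every disk of $\mathcal{C}$ is adjacent to all of $\Psi\subseteq\mathcal{C}$, so $\mathcal{C}\subseteq\Psi\cup X\cup Y$. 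To recover the maximum clique on $\Psi\cup X\cup Y$, I would observe that the complement $H$ of the disk graph induced by $X\cup Y$ is bipartite with parts $X$ and $Y$, so a maximum independent set of $H$ --- equivalently, a maximum clique in the original disk graph on $X\cup Y$ --- can be extracted via K\"onig's theorem from a maximum matching in $H$ in $f(n)$ time. The disks of $\Psi$ are then appended for free, as they are adjacent to every surviving disk by construction.

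For the running time, each of the $O(n^{2k})$ inner guesses requires $O(kn)=O(n^2)$ time to filter disks into $X$ and $Y$, $O(n^2)$ time to assemble $H$, and $f(n)$ time for the matching; multiplying by the outer $2^k$ factor yields the claimed $O(2^k n^{2k}(f(n)+n^2))$ bound. The only real conceptual hurdle, that slabs anchored at extremal disk centers already carve out a clique, is discharged by Lemma~\ref{lem:pairmagic}; the rest of the argument is bookkeeping, with the minor sanity-check that for the correct guess the leftmost/rightmost $a_i,b_i$ of $\mathcal{C}_i$ really do sandwich the remaining type-$i$ centers between $\ell^v_{a_i}$ and $\ell^v_{b_i}$, so that each lies in one of the two slabs.
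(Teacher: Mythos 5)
Your proposal is correct and follows essentially the same route as the paper: guess the set of participating radius types, guess the horizontally extremal centers $a_i,b_i$ per type, use Lemma~\ref{lem:u} (via Lemma~\ref{lem:pairmagic}) to show that the upper-slab set $X$ and lower-slab set $Y$ are each cliques, and then solve a maximum independent set in the cobipartite complement via a bipartite matching. Your interpretation that $X_i,Y_i$ consist of type-$i$ disks is what the lemma's proof actually uses, and your accounting of the $O(2^k n^{2k}(f(n)+n^2))$ cost matches the paper's.
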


% Note that this can easily be turned into an $O( n^t \cdot  n^{2t} (f(n)+n^2))$-time algorithm to find a maximum clique with at most $t$ different types of radii in a general disk graph, by checking for all $O(n^t)$ options for choosing $t$ different radii from $n$ possible options.

% \begin{corollary}
% Given a set of $n$ disks in the Euclidean plane with  arbitrary radii and a positive integer $t$, a maximum clique  with at most $t$ different radii can be computed 
% %Given a disk graph with arbitrary radii, a maximum clique with at most $t$ different radii can be computed
% in $O(n^{3t} (f(n)+n^2)) $ time. Here $f(n)$ is the time to compute a maximum matching in a $n$-vertex bipartite graph. 
% \end{corollary}

\section{Rectangular Range Query over Unit Disks}
%If there is a clique of size m then there is a pair a,b with at most n/m points in between them. So if we run batches from 2 to n/m, then we get a clique size of at least ?? 2 
%In this section we give an $O()$-time algorithm to compute a maximum clique in a unit disk graph, which matches the time complexity of the best-known existing algorithm~\cite{} but appears to be much simpler (Section~\ref{udg}). We also 
In this section, we consider the problem of reporting a maximum clique given a rectangular range query over an arrangement of unit disks. To this end, we show how to compute a maximum clique for every axis-aligned rectangle (determined by the input disk centers) in $O(n^5\log n)$ time, which is at least a factor of $n^{4/3}$ faster than applying the algorithm of~\cite{DBLP:conf/compgeom/EspenantKM23} separately for each axis-aligned rectangle. 

To better explain such advantages of our technique, we first give an overview of the existing algorithms %$O(n^{4.5})$-time algorithm of Clark et al.~\cite{DBLP:journals/dm/ClarkCJ90} 
to compute a maximum clique in a unit disk graph. The idea of Clark et al.'s $O(n^{4.5})$-time algorithm~\cite{DBLP:journals/dm/ClarkCJ90} is to guess the farthest pair of vertices $(v,w)$ in the clique, and then %for each such pair $(v,w)$, 
to find a maximum clique $C$ that contains both $v$ and $w$. Since $v$ and $w$ are the farthest pair, the set $S$ of vertices that are adjacent to both $v,w$ must lie in a lens-shaped region, i.e., the common intersection region of the two disks centered at $v$ and $w$ with radius $|vw|$. Clark et al. showed that the lens can be partitioned by the line segment $vw$ into two halves where the disks with centers in the same half are pairwise intersecting. Consequently, the complement of the disk graph corresponding to $S$ is a bipartite graph $H$, and the maximum clique $C$ can be computed by finding a maximum independent set in $H$. The running time for the unit disk case has subsequently been improved to $O(n^{3.5}\log n)$ by using non-trivial data structures~\cite{breu}, then to $O(n^{3}\log n)$ by examining the lenses in a particular order so that the solution for each lens does not need to be computed from scratch~\cite{DBLP:conf/wg/Eppstein09,DBLP:journals/dcg/EppsteinE94}, and finally, to $O(n \cdot h(m)^{1+o(1)})$ by a combination of divide-and-conquer and plane sweep approach using a circular arc~\cite{DBLP:conf/compgeom/EspenantKM23,chan}. 
Here $h(m)$ is the time to compute a maximum matching in $H$, i.e., the complement of the disk graph inside a lens, and $m$ is the size of the biclique cover of $H$. Since $m\in O(n^{4/3} \polylog(n))$~\cite{DBLP:journals/siamcomp/KatzS97} and a maximum matching can be computed using a maximum-flow algorithm in $O(m^{1+o(1)})$ time~\cite{DBLP:journals/jcss/FederM95}, the running time becomes $O(n^{7/3+o(1)})$. Consider now the scenario of computing a maximum clique for every possible rectangle determined by the input disk centers. Since lenses examined by these algorithms are not necessarily axis-aligned, finding an order of the lenses to compute solutions for the rectangles by dynamic point insertion and deletion appears to be challenging. A straightforward approach that solves each rectangular region independently takes $O(n^{6.33+o(1)})$ time. 
%update the complexity with flow 

% The idea is to guess the leftmost and rightmost points $a,b$ of a maximum clique. For every such guess, one can find a maximum clique by considering the set $S$ of disks that are adjacent to both $a,b$ and have their centers in the slab bounded by $\ell^v_a$ and $\ell^v_b$. By Lemma~\ref{lem:u}, the disks of $S$ that have their centers in $U_{ab}$ (similarly, in 
% $\overline{U}_{a,b}$) are mutually adjacent. Consequently, the complement of the disk graph underlying $S$ is a bipartite graph $H$, and a maximum clique can be computed by finding a maximum independent set in $H$. 
% However, a straightforward analysis gives a running time of $O(n^2g(n))$, where the term $O(n^2)$ corresponds to all possible guesses and $g(n)$ corresponds to the time to solve each problem. Here, $g(n)\in O(n^{4/3+o(1)}\polylog(n))$ is the time (as explained in Section~\ref{sec:intro}) to compute a maximum clique in the disk graph of $O(n)$ disks with their centers lying inside the lens $L(D_{a,|ab|},D_{b,|ab|})$. However, this analysis does not leverage the property that the slabs corresponding to the guesses could be of varied sizes. 
% We thus perform a careful analysis, as in the proof of the following theorem.

% \begin{theorem}
% Given a set of $n$ unit disks in the Euclidean plane, a maximum clique in the corresponding disk graph can be computed in $O(n^{7/3+o(1)}\polylog(n))$ time.
% \end{theorem}
%\begin{proof}
The idea of our algorithm is to maintain solutions for all possible slabs where one can insert and delete points as necessary to find the solutions for all possible rectangles within the slab. For convenience, we assume that the centers of the disks are in general position, i.e.,  no two centers have the same $y$-coordinates. We now describe the details.

%Let $p_1,p_2,\ldots,p_n$ be the unit disk centers in increasing order of their $x$-coordinates. 
Consider first an alternative (slower) approach that uses slabs to compute a maximum clique that guesses the leftmost and rightmost points $a,b$ of a maximum clique. For every such guess, one can find a maximum clique by considering the set $S$ of disks that are adjacent to both $a,b$ and have their centers in the slab bounded by $\ell^v_a$ and $\ell^v_b$. By Lemma~\ref{lem:u}, the disks of $S$ that have their centers in $U_{ab}$ (similarly, in $\overline{U}_{a,b}$) are mutually adjacent. Consequently, the complement of the disk graph underlying $S$ is a bipartite graph $H$, and a maximum clique can be computed by finding a maximum independent set in $H$. 

%We now partition the guesses into $(n-2)$ groups, where the $j$th group $B_j$ contains guesses for which the corresponding slab contains exactly $j$ points in its interior. We further partition the guesses of $B_j$ into $j$ subgroups $B_{j,0},B_{j,1},\ldots, B_{j,j-1}$. Here the subgroup $B_{j,1}$ consists of the pairs $(p_1,p_{j+2}), (p_{j+3},p_{2j+4}),$ $(p_{2j+5},p_{3j+6}),\ldots$ as the guesses (for the leftmost and rightmost points of a maximum clique), and the $q$th subgroup $B_{j,q}$, where $1<q\le j-1$, is obtained by shifting the slabs by $q$ steps, i.e., by taking the pairs $(p_{q+1},p_{q+j+2}), (p_{q+j+3},p_{q+2j+4}),$ $(p_{q+2j+5},p_{q+3j+6}),$ and so on. 

We now consider maintaining the solutions for all possible slabs under point insertion and deletion. Initially, the solution for each slab is set to null. We now sweep the plane upward with two horizontal lines $\ell^b$ and $\ell^u$ in two phases. 

In the first phase, $\ell^b$ is placed so that it passes through the disk center with the lowest $y$-coordinate, and $\ell^u$ is used to sweep the plane upward starting at $\ell_b$.  Each time $\ell^u$ moves to a new point, a point is inserted into the corresponding slabs and their solutions are updated. Every time the horizontal slab between  $\ell^b$ and $\ell^u$ contains the guessed points of a slab (i.e., the two points determining a slab), we obtain a solution for a new rectangular region, and the corresponding  solution %and the disks of the disk graph is also stored because at the end you have to report them ---still the space is n^5? 
is stored in a table. It is straightforward to observe that each insertion changes $O(n^2)$ slabs, and hence we have $O(n^3)$ insertion operations in total.

In the next phase, we sweep the plane upward using  $\ell^b$. Each time $\ell^b$ moves to a new point, a point is deleted from the corresponding rectangular regions we considered in the first phase. A deletion changes $O(n^3)$ previous solutions, i.e., the solutions corresponding to $O(n^2)$ slabs and $O(n)$ regions in each slab determined by the positions of $\ell_b$. Therefore, we have $O(n^4)$ deletion operations in total.

Each insertion and deletion operation updates existing lenses, each of which is determined by a pair of guessed points. It is known that such updates can be done in $O(n\log n)$ time by performing an alternating path search to update the maximum independent set in the bipartite graph determined by the complement of the disk graph inside the lens~\cite{DBLP:journals/dcg/EppsteinE94,DBLP:conf/wg/Eppstein09}. For $O(n^4)$ update operations, the running time becomes $O(n^5\log n)$. By $S[p_\ell,p_r,p_t,p_b]$  we denote the size of a maximum clique inside a rectangle $R$ with its left, right, top and bottom sides being determined by the lines through the disk centers $p_\ell,p_r,p_t,p_b$, respectively, where $|p_\ell,p_r|$ is at most two units and $p_\ell,p_r$ appear on the left and right sides of $R$. If $|p_\ell,p_r|$ is larger than two units, then $S[p_\ell,p_r,p_t,p_b]$ is 0.

We now compute a maximum clique for each axis-aigned rectangle determined by the input disk centers. We first compute two sorted orders of the disk centers, one based on increasing $x$-coordinates and the other based on increasing $y$-coordinates.  We now use a dynamic programming approach, where the base cases are the $O(n^4)$ rectangles that do not contain any disk center in their proper interior. The solution for the base cases can be computed and stored in a table in $O(n^4)$ time. Let $M[p_\ell,p_r,p_t,p_b]$ be the maximum clique of a rectangle $R$ with its left, right, top and bottom sides being determined by the lines through $p_\ell,p_r,p_t,p_b$, respectively. Note that the disk centers $p_\ell,p_r,p_t,p_b$ do not necessarily appear on the sides of $R$. Let $p'_\ell$ and $p'_r$ be the disk centers immediately after and before $p_\ell$ and $p_r$, respectively, which can be determined from the precomputed sorted order. Similarly, $p'_t$ and $p'_b$ are the disk centers immediately before and after $p_t$ and $p_b$, respectively. We now can find a solution using $O(1)$ table look-ups as follows.

\begin{align*}
M[p_\ell,p_r,p_t,p_b]=
\begin{cases}
\max\{S[p_\ell,p_r,p_t,p_b], M[p'_\ell,p_r,p_t,p_b], M[p_\ell,p'_r,p_t,p_b],\\ 
\hspace{1cm}M[p_\ell,p_r,p'_t,p_b], M[p_\ell,p_r,p_t,p'_b]\}, \text{if $p_\ell$ and $p_r$ both appear on $R$}   \\
\max\{M[p'_\ell,p_r,p_t,p_b], M[p_\ell,p'_r,p_t,p_b],M[p_\ell,p_r,p'_t,p_b], M[p_\ell,p_r,p_t,p'_b]\}, \text{otherwise.}   \\
\end{cases}
\end{align*}

Since there are $O(n^4)$ cells in the table and each entry is computed by $O(1)$ table look-ups which are determined in $O(\log n)$ time, the overall running time remains $O(n^5\log n)$.

\begin{theorem}
Given a set of $n$ unit disks in the Euclidean plane, a maximum clique for every possible axis-aligned rectangle determined by the given disk centers can be computed in $O(n^5\log n)$ time.
\end{theorem}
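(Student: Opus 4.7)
The plan is to combine the slab-based approach of Section~\ref{sec:kradii} (specialized to $k=1$) with a two-phase plane sweep that maintains solutions under point insertion and deletion, and then to close the argument with a short dynamic program. The starting observation is that if a maximum clique inside an axis-aligned rectangle $R$ has leftmost and rightmost disk centers $a,b$, then by Lemma~\ref{lem:u} the disks that intersect both $D_{a,1}$ and $D_{b,1}$ and have centers in $U_{ab}\cap R$ (resp.\ $\overline{U}_{ab}\cap R$) form a clique, so the complement of the induced disk graph is bipartite and the maximum clique can be recovered from a maximum bipartite matching. Consequently, it suffices to compute, for every center pair $(a,b)$ with $|ab|\le 2$, the maximum clique inside each rectangle whose vertical sides are the lines through $a$ and $b$; I will call this table $S[p_\ell,p_r,p_t,p_b]$.

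To amortize across all such rectangles, I would sweep with two horizontal lines $\ell^b\le\ell^u$. In phase one, $\ell^b$ is placed at the lowest center and $\ell^u$ moves upward; each new center crossed by $\ell^u$ is inserted into every candidate slab, and whenever $\ell^u$ lies at a center we record, for each slab, the current maximum clique as the entry of $S$ with that top side. Phase two then sweeps $\ell^b$ upward through the remaining choices of bottom side and performs the symmetric deletions, recording new $S$ entries as they become valid. This visits every rectangle with $(a,b)$ on its left and right sides exactly once.

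The main obstacle—and the step on which the $O(n^5\log n)$ bound hinges—is processing each insertion or deletion in $O(n\log n)$ time rather than the naive $\Theta(n^2)$ of recomputing a matching from scratch. For this I would invoke the alternating-path update of Eppstein and Erickson~\cite{DBLP:journals/dcg/EppsteinE94,DBLP:conf/wg/Eppstein09}, which maintains a maximum matching in the bipartite complement of a unit-disk graph inside a slab under single-point updates in $O(n\log n)$ time. A careful count shows phase one performs $O(n^3)$ insertion events (each point enters $O(n^2)$ slabs) and phase two performs $O(n^4)$ deletion events (each point leaves $O(n^2)$ slabs, and in each slab affects $O(n)$ rectangles determined by the positions of $\ell^b$), yielding $O(n^5\log n)$ total.

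Finally, the table $S$ only covers rectangles where the extreme centers sit on the left and right sides; to obtain $M[p_\ell,p_r,p_t,p_b]$ for arbitrary rectangles I would run the stated dynamic program in sorted-coordinate order. With precomputed $x$- and $y$-orderings, the next interior centers $p'_\ell,p'_r,p'_t,p'_b$ are found in $O(\log n)$, so each of the $O(n^4)$ entries is filled by $O(1)$ table look-ups plus $O(\log n)$ search. This DP costs $O(n^4\log n)$ and is dominated by the sweep, giving the claimed $O(n^5\log n)$ overall bound. The step I expect to require the most care is the bookkeeping that makes the ``one update per affected slab'' charging rigorous, since a single sweep event must efficiently enumerate precisely those slabs whose defining pair $(a,b)$ currently lies between $\ell^b$ and $\ell^u$ without paying extra factors.
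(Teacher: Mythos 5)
Your proposal follows the paper's proof essentially step for step: the same slab/Lemma~\ref{lem:u} observation yielding a cobipartite complement, the same two-phase sweep with $\ell^b$ fixed at the bottom for insertions and then moving up for deletions, the same $O(n^3)$-insertion/$O(n^4)$-deletion count with $O(n\log n)$ alternating-path updates via Eppstein--Erickson, and the same $M$-from-$S$ dynamic program with $O(\log n)$ neighbor lookups. The bookkeeping concern you flag at the end is handled the same way (and with the same level of detail) in the paper, so there is no substantive divergence.
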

 
\section{Maximum Clique in Ball Graphs with $k$ Different Radii}

We now give an  $O(n^{2rk} (f(n)+n^2r))$-time algorithm for computing a maximum clique in a ball graph, where the centers of the balls are contained in $r$ planes which are all perpendicular to a different plane (Section~\ref{sec:para2}). For simplicity, we first consider a scenario when the input planes are parallel to each other (Section~\ref{sec:para}).

\subsection{Ball Centers are on $r$ Parallel Planes}\label{sec:para}

Without loss of generality, we assume that the $r$ input  planes are parallel to the xy-plane. We first show that a version of Lemma~\ref{lem:u} holds in three dimensions, as follows. 

\begin{lemma}\label{lem:u3d}
Let $ab$ be a line segment on a plane $L$, where $L$ is parallel to the xy-plane, and let $U_{ab}$ be the upper slab of $ab$ on $L$. Let $q$ be a point in $U_{ab}$. Let $p$ be a point (not necessarily on $L$) with a $y$-coordinate equal to or larger than the $y$-coordinate of $q$. Then $|pq|\le
\max\{|pa|,|pb|\}$. 
\end{lemma}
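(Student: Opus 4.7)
The plan is to reduce the three-dimensional statement to the two-dimensional Lemma~\ref{lem:u} by projecting $p$ orthogonally onto the plane $L$. Because $L$ is parallel to the xy-plane, every point on $L$ shares the same $z$-coordinate $z_0$, and orthogonal projection onto $L$ is simply the map that replaces the $z$-coordinate with $z_0$. This projection will preserve both the $y$-coordinates of all points involved and the relative geometry that drives Lemma~\ref{lem:u}.

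First I would let $p'$ be the orthogonal projection of $p$ onto $L$. Since $a,b,q$ all lie on $L$, each has $z$-coordinate $z_0$, so the Pythagorean theorem yields the three identities
\[
|pq|^2 = |p'q|^2 + (z_p-z_0)^2,\quad |pa|^2 = |p'a|^2 + (z_p-z_0)^2,\quad |pb|^2 = |p'b|^2 + (z_p-z_0)^2.
\]
Because the additive term $(z_p-z_0)^2$ is the same in all three expressions, the inequality $|pq|\le\max\{|pa|,|pb|\}$ is equivalent to the planar inequality $|p'q|\le\max\{|p'a|,|p'b|\}$.

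Next I would verify that Lemma~\ref{lem:u} applies to the configuration $(a,b,q,p')$ inside $L$. The point $p'$ has the same $y$-coordinate as $p$, which by hypothesis is at least the $y$-coordinate of $q$; the segment $ab$ lies in $L$ and defines an upper slab $U_{ab}$ of $L$ that contains $q$. Treating $L$ as a copy of the Euclidean plane (with the same $x,y$ coordinates it inherits from $\mathbb{R}^3$), Lemma~\ref{lem:u} immediately gives $|p'q|\le\max\{|p'a|,|p'b|\}$, and combining with the equivalence above completes the proof.

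There is no real obstacle here beyond being careful about what ``upper slab'' means in three dimensions: the slab $U_{ab}$ is a planar region inside $L$, not a three-dimensional region, and $p$ is explicitly allowed to lie off $L$. The projection trick works precisely because $L$ is parallel to the xy-plane, so the $y$-coordinate hypothesis survives the projection; if one later wanted to relax this to arbitrary planes (as Section~\ref{sec:para2} apparently does), a different argument would be needed, but for the parallel case the projection is the entire content of the proof.
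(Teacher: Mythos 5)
Your proof is correct and follows essentially the same route as the paper: project $p$ orthogonally onto $L$ to get $p'$, apply Lemma~\ref{lem:u} to the planar configuration $(a,b,q,p')$, and then lift the inequality back to $\mathbb{R}^3$ via the Pythagorean theorem, using the fact that $|pp'|^2$ is a common additive term in $|pq|^2$, $|pa|^2$, $|pb|^2$. The only cosmetic difference is that you write the common term as $(z_p-z_0)^2$ and phrase the conclusion as an equivalence, whereas the paper manipulates the square-root inequality directly.
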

\begin{proof} 
Let $p'$ be the projection of $p$ on $L$ (Figure~\ref{fig:poly}(a)). Then the $y$-coordinate of $p'$ is equal to or larger than the $y$-coordinate of $q$. 
By Lemma~\ref{lem:u},   $|p'q|\le \max\{|p'a|,|p'b|\}$ and thus 
\begin{align*}
&\sqrt{|pq|^2-|pp'|^2} \le \max\{\sqrt{|pa|^2-|pp'|^2},\sqrt{|pb|^2-|pp'|^2}\}\\
\implies &{|pq|^2-|pp'|^2} \le \max \{{|pa|^2-|pp'|^2},{|pb|^2-|pp'|^2}\}\\
\implies &{|pq|} \le  \max\{|pa|,|pb|\}.
\end{align*}  
\end{proof}

Let $B_k$ be a ball graph with $k$ different types of radii. 
Similar to Section~\ref{sec:kradii}, we guess the number of different types of radii that may appear in a maximum clique and we take the maximum over all the solutions computed from these $2^k$ guesses. For each ball type $i$, we guess at most $2r$ ball centers $\{a^1_i,b^1_i\},\ldots, \{a^r_i,b^r_i\}$ from $\mathcal{C}_i$. Here $a^j_i$ is the leftmost and $b^j_i$ is the rightmost over all the ball centers in $\mathcal{C}_i$ on the $j$th plane, where $1\le j\le r$. {\color{black}Note that  $a^j_i$ may sometimes coincide with $b^j_i$}. Let $\Psi$ be the set of these $2rk$ balls.  For every upper slab $U_{a^j_ib^j_i}$ on the $j$th plane, we construct a set $X^j_i$ by taking every ball that has its center in $U_{a^j_ib^j_i}$ and intersects all the balls of $\Psi$.  We show that the union of these balls, i.e., $X= \bigcup\limits_{1\le i\le k}(X^1_i\cup X^2_i\cup\ldots\cup X^r_i)$, is a clique in $\mathcal{D}_k$. Similarly, for each lower slab $\overline{U}_{a_ib_i}$, we construct a set $Y^j_i$ by taking every ball that has its center in $\overline{U}_{a^j_ib^j_i}$ and intersects all the balls of $\Psi$. We show that their union $Y= \bigcup\limits_{1\le i\le k}(Y^1_i\cup Y^2_i\cup\ldots\cup Y^r_i)$ is a clique in $\mathcal{D}_k$. %Since the balls of $\Psi$ are assumed to be in $\mathcal{C}$, it is straightforward to observe that $\mathcal{C}$ is a subset of the balls in $(\Psi\cup X\cup Y)$. %Consequently, the complement of the ball graph determined by the balls $(X \cup Y)$ is a bipartite graph $H$, and we can compute the maximum clique $\mathcal{C}$ by from a maximum bipartite matching in $H$. 
It now suffices to show that the balls in $X$ (similarly, the balls in $Y$) are mutually adjacent.

\begin{figure}[h]
    \centering
    \includegraphics[width=.8\textwidth]{"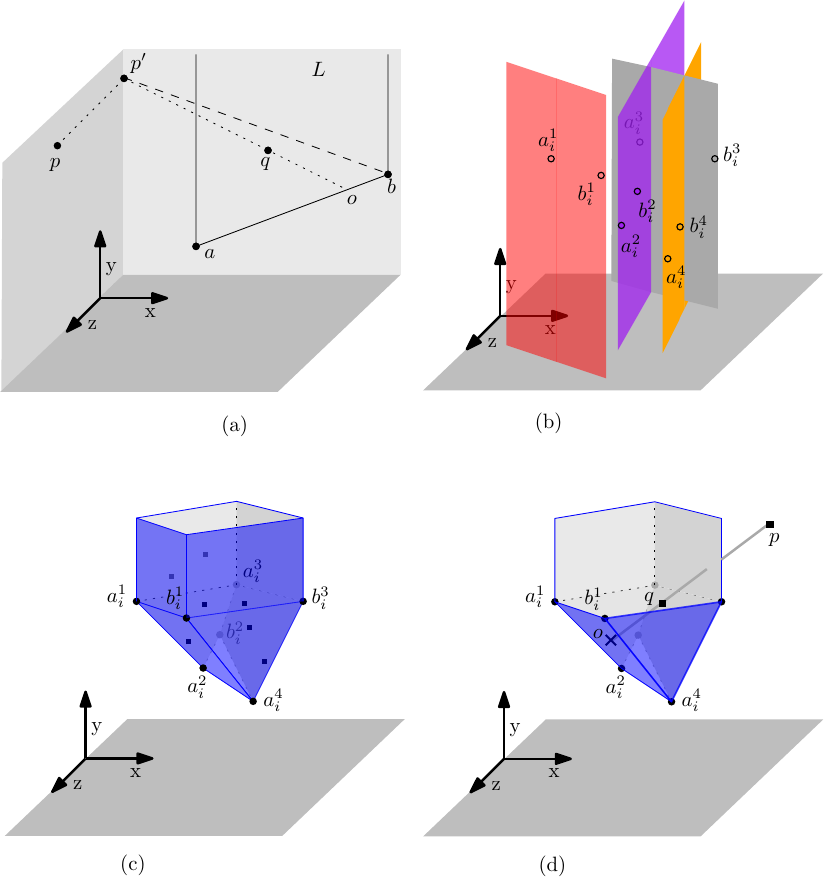"}
    \caption{ Illustration for (a) Lemma~\ref{lem:u3d}, and (b)--(d) Lemma~\ref{lem:polymagic}. }
    \label{fig:poly}
\end{figure}

\begin{lemma}\label{lem:pairmagic3d}
    For every $i,j$, where $1\le i,j\le k$, the balls in $X^g_i\cup X^h_j$, where $1\le g\le h\le r$, are mutually adjacent. 
\end{lemma}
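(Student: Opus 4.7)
The plan is to lift the proof of Lemma~\ref{lem:pairmagic} verbatim to three dimensions, replacing every invocation of Lemma~\ref{lem:u} with Lemma~\ref{lem:u3d}. Pick two arbitrary balls in $X^g_i\cup X^h_j$, say $B_{p,r_i}\in X^g_i$ and $B_{q,r_j}\in X^h_j$; the target inequality is $|pq|\le r_i+r_j$, which is equivalent to the two balls intersecting.

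I would first reduce to the case where the $y$-coordinate of $p$ is at least the $y$-coordinate of $q$, by swapping the roles of $(p,i,g)$ and $(q,j,h)$ if necessary. Then I would apply Lemma~\ref{lem:u3d} with $L$ set to the $h$th input plane (which is parallel to the $xy$-plane by hypothesis), with $ab$ set to the segment $a^h_j b^h_j\subset L$, and with the point $q\in U_{a^h_j b^h_j}$; the lemma yields $|pq|\le \max\{|p a^h_j|,|p b^h_j|\}$. Finally, since $p\in X^g_i$, the ball $B_{p,r_i}$ is adjacent to every ball of $\Psi$, in particular to $B_{a^h_j,r_j}$ and $B_{b^h_j,r_j}$; so both $|p a^h_j|$ and $|p b^h_j|$ are at most $r_i+r_j$. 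Chaining the inequalities gives $|pq|\le r_i+r_j$ as required.

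I do not foresee a genuine obstacle, since Lemma~\ref{lem:u3d} was crafted precisely so that the planar argument transfers without modification: it allows $p$ to sit anywhere in $\mathbb{R}^3$ while only $q$ is required to live in the slab on the plane $L$. The only fine point is the WLOG swap in the case $(i,g)=(j,h)$, where both balls come from the same set $X^g_i$; here the symmetry is witnessed by the pair $a^g_i,b^g_i$ on the $g$th plane, exactly as in the $i=j$ subcase of the planar Lemma~\ref{lem:pairmagic}, so the swap remains legitimate.
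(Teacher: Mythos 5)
Your proposal is correct and follows the same approach as the paper: lift the proof of Lemma~\ref{lem:pairmagic} by substituting Lemma~\ref{lem:u3d} for Lemma~\ref{lem:u}. The only cosmetic difference is that you handle the subcases uniformly, whereas the paper dispatches $g=h$ by citing Lemma~\ref{lem:pairmagic} directly and splits $i=j$ from $i\neq j$; your unified chain $|pq|\le\max\{|pa^h_j|,|pb^h_j|\}\le r_i+r_j$ covers all of them at once.
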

\begin{proof}
Let $B_{p,r_i}$ and $B_{q,r_j}$ be two balls, where $p$ belongs to $X^g_i$ and $q$ belongs to $X^h_j$. We now show that $B_{p,r_i}$ and $B_{q,r_j}$ must mutually intersect. If $g=h$, then the proof follows from Lemma~\ref{lem:pairmagic}. Therefore, we may assume that $g\not =h$. However, the proof in this case is the same as that of Lemma~\ref{lem:pairmagic} except that we use Lemma~\ref{lem:u3d} instead of Lemma~\ref{lem:u} for the arguments, as follows. 

Consider first the case when $i=j$. Without loss of generality assume that the $y$-coordinate of $p$ is larger than or equal to that of $q$. Then by Lemma~\ref{lem:u3d}, $|pq|\le \max\{|pa^h_j|,|pb^h_j|\}$. By the construction of $\Psi$, the ball $B_{p,r_i}$ intersects $B_{a^h_j,r_i}$ and $B_{b^h_j,r_i}$. We thus have  $|pq|\le \max\{|pa^h_j|,|pb^h_j|\} \le 2r_j$, and since $p$ and $q$ correspond to type-$j$ balls, they must intersect.  

Consider now the case when $i\not=j$. If the $y$-coordinate of $p$ is larger than or equal to that of $q$, then we apply Lemma~\ref{lem:u3d} %using $q\in U_{a^h_j,b^h_j}$ 
to obtain $|pq|\le \max\{|pa^h_j|,|pb^h_j|\}$.  By the construction of $\Psi$, the ball $B_{p,r_i}$ intersects $B_{a^h_j,r_j}$ and $B_{b^h_j,r_j}$. We thus have  $|pq|\le \max\{|pa^h_j|,|pb^h_j|\} \le r_i+r_j$. Consequently, $B_{p,r_i}$ and $B_{q,r_j}$ mutually intersect. If the $y$-coordinate of $p$ is smaller than that of $q$, then we apply Lemma~\ref{lem:u3d} by swapping the role of $p$ and $q$, i.e., by using the condition that $p\in U_{a^g_i,b^g_i}$ whereas $q$ has a larger $y$-coordinate than that of $p$. We thus have $|pq|\le \max\{|qa^g_i|,|qb^g_i|\}$. By the construction of $\Psi$, we have  $\max\{|qa^g_i|,|qb^g_i|\} \le r_i+r_j$. Consequently, $B_{p,r_i}$ and $B_{q,r_j}$ mutually intersect. 
\end{proof}

We now consider the running time. For a single plane, there are at most $O(n^2)$ ways to guess the pair of balls for a particular ball type, and thus $O(n^{2k})$ ways for $k$  pairs considering all types. Since we have $r$ planes, the total number of guesses is $O(n^{2rk})$. For each guess of $rk$ pairs, it is straightforward to construct the sets $X$ and $Y$ in $O(rkn)\in O(n^2r)$ time, and the ball graph determined by the balls $(X \cup Y)$ in $O(n^2)$ time. Let $f(n)$ be the time for computing a maximum matching in a $n$-vertex bipartite graph.  Then the running time is $O(n^{2rk} (f(n)+n^2r))$. %After taking the $2^k$ different guesses for radii sizes that may appear in a maximum clique into consideration, the overall running time becomes $O(2^k n^{2rk} (f(n)+n^2r))$.

The following theorem summarizes the result of this section.

\begin{theorem}
Given a set of $n$ balls in the Euclidean plane with $k$  different types of radii where the ball centers are contained on $r$ parallel planes, a maximum clique in the corresponding ball graph can be computed in $O(n^{2rk} (f(n)+n^2r))$ time. Here $f(n)$ is the time to compute a maximum matching in a $n$-vertex bipartite graph. 
\end{theorem}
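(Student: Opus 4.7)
The plan is to lift the slab-based algorithm of Section~\ref{sec:kradii} to three dimensions by handling the $r$ parallel planes simultaneously, using Lemma~\ref{lem:u3d} in place of Lemma~\ref{lem:u}. First I would guess the subset of the $k$ radii types that actually appears in a maximum clique $\mathcal{C}$, which contributes the outer factor of $2^k$. Then, for each selected type $i$ and each plane $L_j$ (with $1 \le j \le r$), I would guess the leftmost and rightmost centers $a^j_i, b^j_i$ among the balls of $\mathcal{C}_i$ that lie on $L_j$, allowing $a^j_i = b^j_i$ and allowing the guess on $L_j$ to be empty if $\mathcal{C}_i$ contains no ball of that plane. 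The resulting set $\Psi$ contains at most $2rk$ guessed balls, with $O(n^2)$ options per type per plane, so $O(n^{2rk})$ options overall.

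Given $\Psi$, I would form, for each $i$ and $j$, the set $X^j_i$ of balls whose centers lie in the upper slab $U_{a^j_i b^j_i}$ on $L_j$ and which intersect every ball of $\Psi$, together with its lower-slab analogue $Y^j_i$. Setting $X = \bigcup_{i,j} X^j_i$ and $Y = \bigcup_{i,j} Y^j_i$, Lemma~\ref{lem:pairmagic3d} certifies that $X$ and $Y$ are each cliques of the ball graph. Hence the complement of the ball graph restricted to $X \cup Y$ has edges only between $X$ and $Y$, i.e., is bipartite, and a maximum clique inside $\Psi \cup X \cup Y$ can be recovered from a maximum bipartite matching in this complement. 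Returning the best clique over all guesses of the radii subset and of $\Psi$ terminates the algorithm, since the correct guess produces a candidate set containing $\mathcal{C}$.

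For the running time, each guess requires, for every one of the $n$ input balls, an adjacency check against the $O(rk)$ anchors of $\Psi$ and a containment test for the $O(rk)$ relevant slabs; this is $O(n \cdot rk) \subseteq O(n^2 r)$ work to build $X$ and $Y$, after which one bipartite matching costs $f(n)$. Multiplying by the $2^k \cdot n^{2rk}$ guesses yields the claimed $O(2^k n^{2rk}(f(n) + n^2 r))$ bound.

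The main obstacle is justifying that the slab structure still forces pairwise adjacency when the two balls in question sit on different planes, which was trivial in the planar case. This is exactly what Lemma~\ref{lem:u3d} is designed for: projecting the off-plane center onto the slab's plane preserves the $y$-coordinate dominance required by Lemma~\ref{lem:u}, and the Pythagorean identity in that lemma's proof transfers the planar distance bound to the three-dimensional distance. Once this inequality is available, the case analysis of Lemma~\ref{lem:pairmagic3d} handles both same-plane and cross-plane pairs uniformly, and the remainder of the argument is a direct transcription of the two-dimensional proof in Section~\ref{sec:kradii}.
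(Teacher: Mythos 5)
Your proposal matches the paper's proof essentially step for step: the $2^k$ guess of radii types, the $O(n^{2rk})$ guesses of leftmost/rightmost anchors per type per plane, the slab-based sets $X^j_i$ and $Y^j_i$ filtered by intersection with $\Psi$, the appeal to Lemma~\ref{lem:u3d} and Lemma~\ref{lem:pairmagic3d} to certify that $X$ and $Y$ are cliques, and the reduction to bipartite matching on the complement of $X \cup Y$. The running-time accounting is also the same, so this is the paper's argument reproduced faithfully.
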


\subsection{Ball Centers are on $r$ Planes which are Perpendicular to a Different Plane}\label{sec:para2}
We now show that the condition for the input planes being parallel to each other can be relaxed as long as the input planes are all perpendicular to a different plane. Without loss of generality assume that the $r$ input planes are perpendicular to the xz-plane (e.g., Figure~\ref{fig:poly}(b)). We first observe the following property of a maximum clique. 

\begin{lemma}
    Let $B_k$ be a ball graph with $k$ different radii with centers on $r$ planes that are perpendicular to the xz-plane. Let  $\mathcal{C}_i$ be all the balls of type $i$ in a maximum clique of $B_k$. Let $P$ be the projection of the ball centers in $\mathcal{C}_i$ on the xz-plane. Then the convex-hull boundary of $P$ contains at most $2r$ vertices (i.e., they correspond to strictly convex corners).
\end{lemma}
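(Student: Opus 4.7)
The plan is to observe that the hypothesis on the $r$ planes forces the projected point set $P$ to lie on at most $r$ lines in the xz-plane, and then to note that on any single line at most two points can be strictly convex corners of the convex hull.

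First I would fix coordinates so that the distinguished plane is the xz-plane. A plane is perpendicular to the xz-plane exactly when its normal is orthogonal to $\hat{y}$, i.e., the normal has no $y$-component; thus each such plane has a defining equation of the form $ax+cz=d$. Orthogonally projecting this plane onto the xz-plane (i.e., dropping the $y$-coordinate) yields the line $\{(x,0,z) : ax+cz=d\}$. Therefore the $r$ input planes project to at most $r$ lines $\ell_1,\ldots,\ell_r$ in the xz-plane, and since every center of $\mathcal{C}_i$ lies on one of the input planes, $P \subseteq \ell_1 \cup \cdots \cup \ell_r$.

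Next I would argue that each line $\ell_j$ contributes at most two strictly convex corners of the convex hull of $P$. The points of $P$ lying on $\ell_j$ are collinear; among them only the two extremes along $\ell_j$ can possibly be vertices of $\mathrm{conv}(P)$ where the hull boundary genuinely turns. Any other point of $P \cap \ell_j$ is a proper convex combination of those two extremes, hence either lies on an edge of the hull between them (not a strictly convex corner) or in the interior of the hull. Summing over $j=1,\ldots,r$ gives at most $2r$ strictly convex corners on the boundary of $\mathrm{conv}(P)$, which is the claimed bound.

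There is no substantive obstacle: the lemma is a purely geometric statement about projections and does not use the clique property of $\mathcal{C}_i$ at all. The only thing worth being careful about is distinguishing strictly convex corners from collinear points on a hull edge, but the problem statement already specifies that vertices of the convex-hull boundary are counted only when they are strictly convex corners, which matches the bound of two per line exactly.
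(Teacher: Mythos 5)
Your proof is correct and follows essentially the same argument as the paper: planes perpendicular to the xz-plane project to lines, so $P$ lies on at most $r$ lines, and each line can contribute at most two strictly convex corners to the convex hull boundary. You simply spell out the projection step and the collinearity reasoning in more detail than the paper's two-sentence proof.
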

\begin{proof} 
 If three points of $P$ come from the same original plane, then they are collinear in the xz-plane, thus only two of them can appear as strictly convex corners on the convex-hull boundary of $P$. Therefore, the $r$ planes can contribute to at most $2r$ vertices in total. 
% {\color{black} Need to check this proof carefully and perhaps there is a much simpler proof}
%     Assume for a contradiction that the convex-hull boundary of $P$ contains at least $(2r+1)$ convex corners. Consider an edge $e$ on the convex-hull boundary. If $e$ is not contained in a given plane, then it must join two ball centers from two different planes when we refer to $e$ as a  \emph{non-plane edge}. If two consecutive edges on the convex hull are non-plane edges, then the common vertex must correspond to an input plane that does not appear on the convex hull boundary. Assume that we have $\delta$ non-plane edges.
    %
%     If $\delta>r$, then at least $(\delta-r+1)$ such cases of common vertices. We need at least $\lceil (\delta-r+1)/2\rceil$ planes to cover these vertices. The remaining $(2r+1-\delta)$ edges each correspond to an input plane. Therefore, the total number of planes is  $\lceil (\delta-r+1)/2\rceil+(2r+1-\delta) =  \lceil (3r-\delta+3)/2 \rceil \ge \lceil (3r-(r+1)+3)/2 \rceil = (r+1)$, which contradicts that the ball centers are contained in $r$ planes.   
%
%     If $delta \le r$, then the remaining $(2r+1-\delta)\ge (r+1)$ edges correspond to different input planes, which gives a contradiction.     
\end{proof}

Let $Q$ be a set of points in three dimensions. By a \emph{convex hull} of $Q$ we denote the smallest convex polyhedra that contains all points in $Q$. A \emph{lower (upper) envelope} of $Q$ consists of all points that lie at the boundary of the convex hull of $Q$ such that the rays that start at these points and move along the negative y-axis (positive y-axis) do not contain any interior point of the convex hull. An \emph{extended lower envelope} of $Q$ is a three-dimensional region that consists of the points on the lower envelope and all points that are hit by the rays that start at the lower envelope and move along the positive y-axis. Figure~\ref{fig:poly}(c) illustrates an extended lower envelope. We also define an \emph{extended upper envelope} of $Q$ symmetrically.  

Similar to Section~\ref{sec:kradii}, we guess the number of different types of radii that may appear in a maximum clique. For each ball type $i$, we guess at most $2r$ ball centers $\{a^1_i,b^1_i\},\ldots, \{a^r_i,b^r_i\}$ from $\mathcal{C}_i$, where $a^j_i$ is the leftmost  and $b^j_i$ is the rightmost over all the ball centers in $\mathcal{C}_i$ (e.g., Figure~\ref{fig:poly}(b)) on the $j$th plane. It is straightforward to observe that the convex hull of the projection of the guessed ball centers on the xz-plane contains the projection of all ball centers of $\mathcal{C}_i$. Let $\Psi$ be the set of at most $2rk$ balls that we guess over all radii types. 

For every extended lower envelope $U_i$ for the ball centers guessed for type $i$, we construct a set $\mathcal{X}_i$ by taking every ball that has its center in $U_i$ and intersects all the balls of $\Psi$. We show that the union of these balls, i.e., $\mathcal{X}= (\mathcal{X}_1\cup \mathcal{X}_2\cup\ldots\cup \mathcal{X}_i)$, determines a clique. Similarly, for each extended upper envelope $\overline{U}_i$, we construct a set $\mathcal{Y}_i$ by taking every ball that has its center in $\overline{U}_i$ and intersects all the balls of $\Psi$. We show that their union $\mathcal{Y}= (\mathcal{Y}_1\cup \mathcal{Y}_2\cup\ldots\cup \mathcal{Y}_i)$, determines a clique. It now suffices to show that the balls in $\mathcal{X}$ (similarly, the balls in $\mathcal{Y}$) are mutually adjacent.

\begin{lemma}\label{lem:polymagic}
    For every $i,j$, where $1\le i,j\le k$, the balls in $\mathcal{X}_i\cup \mathcal{X}_j$ are mutually adjacent. 
\end{lemma}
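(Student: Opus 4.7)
The plan is to adapt the proof of Lemma~\ref{lem:pairmagic3d}, but since the guessed type-$j$ centers no longer all lie on a single plane containing $q$, the two-endpoint bound coming from Lemma~\ref{lem:u3d} must be replaced by a convex-combination bound that exploits the structure of the extended lower envelope. Let $B_{p,r_i}$ and $B_{q,r_j}$ be two balls with $p \in \mathcal{X}_i$ and $q \in \mathcal{X}_j$. Without loss of generality assume the $y$-coordinate of $p$ is at least that of $q$; the opposite case is handled symmetrically by swapping the roles of $p$ and $q$ and using the extended lower envelope $U_i$ in place of $U_j$.

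The first step is to push $q$ straight down onto the lower envelope of the type-$j$ guessed centers. Since $q \in U_j$, the downward ray from $q$ in the $-y$ direction hits the lower envelope at some point $q' = (q_x, q'_y, q_z)$ with $q'_y \le q_y$. Because the lower envelope is part of the boundary of the convex hull of the at most $2r$ type-$j$ guessed centers, $q'$ lies in that convex hull, so there exist coefficients $\lambda_\ell \ge 0$ with $\sum_\ell \lambda_\ell = 1$ satisfying $q' = \sum_\ell \lambda_\ell c_\ell$, where each $c_\ell$ is a type-$j$ guessed center.

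Next I compare $|pq|$ with $|pq'|$. Since $p_y \ge q_y \ge q'_y$ and $q, q'$ share their $x$- and $z$-coordinates,
\begin{align*}
|pq|^2 - |pq'|^2 \;=\; (p_y-q_y)^2 - (p_y-q'_y)^2 \;\le\; 0,
\end{align*}
hence $|pq| \le |pq'|$. Applying the triangle inequality to the convex combination yields
\begin{align*}
|pq'| \;=\; \Bigl|\sum_\ell \lambda_\ell (p - c_\ell)\Bigr| \;\le\; \sum_\ell \lambda_\ell\,|p - c_\ell| \;\le\; \max_\ell |p - c_\ell|.
\end{align*}
Because $p \in \mathcal{X}_i$, the ball $B_{p,r_i}$ intersects every ball of $\Psi$ by construction, and in particular each type-$j$ guessed ball $B_{c_\ell,r_j}$, giving $|p - c_\ell| \le r_i + r_j$ (or $\le 2r_i$ when $i = j$). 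Chaining these bounds yields $|pq| \le r_i + r_j$, so $B_{p,r_i}$ and $B_{q,r_j}$ intersect.

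The only real difficulty lies in the first geometric step: extracting a convex combination expression for $q'$. This uses the definition of $U_j$ in an essential way---$U_j$ is exactly the region directly above the lower envelope, which is a portion of the convex hull boundary of the guessed centers---and it is precisely why the algorithm defines $\mathcal{X}_j$ via an extended lower envelope rather than via per-plane upper slabs as in Section~\ref{sec:para}. Once $q' = \sum_\ell \lambda_\ell c_\ell$ is written down, the remainder is a short triangle-inequality computation that plays the role Lemma~\ref{lem:u3d} played in the parallel-planes argument.
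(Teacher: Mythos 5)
Your proof is correct, and it takes a genuinely cleaner route than the paper's argument while following the same high-level strategy of reducing $|pq|$ to a distance from $p$ to a guessed type-$j$ center. The paper follows the ray from $p$ through $q$, finds where it exits the extended lower envelope $U_j$ (the paper writes $U_i$ here, but the rest of that proof---which bounds the distance to a vertex $w$ with a ball of radius $r_j$---makes clear $U_j$ is intended), and then splits into two cases depending on whether the exit point $o$ lies on a face of the lower envelope proper or on a vertical side of the extended region; in each case it argues $|po|$ is maximized at a vertex of that face. Your approach instead drops $q$ vertically to a point $q'$ on the lower envelope, shows $|pq| \le |pq'|$ by a one-line coordinate computation using $p_y \ge q_y \ge q'_y$, and then bounds $|pq'|$ in one shot by writing $q'$ as a convex combination of the guessed centers and invoking the triangle inequality. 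This buys you a case-free argument (no need to distinguish envelope faces from extended-envelope sides) and replaces the paper's slightly informal ``move $o$ away from $m$ until it hits a vertex'' reasoning with a precise convexity statement, at the cost of invoking Carath\'eodory-style reasoning that is not needed in the paper's face-wise argument. Both are valid; yours is arguably the tidier write-up.
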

\begin{proof}
Let $B_{p,r_i}$ and $B_{q,r_j}$ be two balls, where $p$ belongs to $\mathcal{X}_i$ and $q$ belongs to $\mathcal{X}_j$. We now show that  $B_{p,r_i}$ and $B_{q,r_j}$ must mutually intersect.

%Consider first the case when $i=j$. 
The argument in the rest of the proof works irrespective of whether $i=j$ or not. Without loss of generality assume that the $y$-coordinate of $p$ is larger than or equal to that of $q$. Let $R$ be the ray that starts at $p$ and passes through $q$, and let $o$ be the intersection point of $R$ when $R$ exits the extended lower envelope $U_i$. Figure~\ref{fig:poly}(d) illustrates such a scenario when $i\not= j$. 

Assume first that $o$ hits a face $F$ of the lower envelope, and let $L$ be the plane determined by $F$. Let $m$ be the point of $L$ that minimizes the distance $|pm|$. Since $F$ is convex, moving $o$ away from $m$ on $F$ would increase the length $|po|$ monotonically and hit a vertex $w$ of $F$. Since  $B_{p,r_i}$ intersects $B_{w,r_j}$, and since $|pq|\le |po|\le |pw|\le r_i+r_j$, the balls $B_{p,r_i}$ and $B_{q,r_j}$ must intersect. 

Assume now that $o$ does not belong to the lower envelope but hits a side $F$ of the extended lower envelope, and let $L$ be the plane determined by $F$. Let $m$ be the point of $L$ that minimizes the distance $|pm|$. Since $F$ is convex, moving $o$ towards negative y-axis and away from $m$ on $F$ would increase the length $|po|$ monotonically and hit a vertex $w$ of $F$. Since  $B_{p,r_i}$ intersects $B_{w,r_j}$, and since $|pq|\le |po|\le |pw|\le r_i+r_j$, the balls $B_{p,r_i}$ and $B_{q,r_j}$ must intersect. 
\end{proof}

The following theorem summarizes the result of this section.

\begin{theorem}
Given a set of $n$ balls in the Euclidean plane with $k$  different types of radii where the ball centers are contained on $r$ planes that are perpendicular to a single different plane, a maximum clique in the corresponding ball graph can be computed in $O(n^{2rk} (f(n)+n^2r))$ time. Here $f(n)$ is the time to compute a maximum matching in a $n$-vertex bipartite graph. 
\end{theorem}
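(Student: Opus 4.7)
The plan is to mirror the argument in Section~\ref{sec:para}, substituting extended lower/upper envelopes for the upper/lower slabs and invoking Lemma~\ref{lem:polymagic} in place of Lemma~\ref{lem:pairmagic3d}. First I would guess which of the $k$ radius classes contribute to an optimal clique $\mathcal{C}$, contributing the $2^k$ factor. For each class $i$ that is present, the preceding lemma guarantees that the xz-projections of the centers of $\mathcal{C}_i$ have a convex hull with at most $2r$ strictly convex vertices: at most two per plane, namely the leftmost $a^j_i$ and the rightmost $b^j_i$ of $\mathcal{C}_i$ on the $j$th plane. I would enumerate these $\le 2r$ centers per type from the input, giving $O(n^2)$ options per plane per type and $O(n^{2rk})$ tuples in total.

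Having fixed the set $\Psi$ of at most $2rk$ boundary balls, I would form $\mathcal{X}_i$ (respectively $\mathcal{Y}_i$) by collecting those input balls whose centers lie in the extended lower envelope $U_i$ (respectively the extended upper envelope $\overline{U}_i$) of the guessed type-$i$ points and that intersect every ball in $\Psi$, and then set $\mathcal{X} = \bigcup_i \mathcal{X}_i$ and $\mathcal{Y} = \bigcup_i \mathcal{Y}_i$. Lemma~\ref{lem:polymagic}, together with its symmetric counterpart for $\mathcal{Y}$, shows that $\mathcal{X}$ and $\mathcal{Y}$ are each cliques, so the complement of the ball graph on $\mathcal{X}\cup\mathcal{Y}$ is bipartite. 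The best clique containing $\Psi$ inside $\Psi\cup\mathcal{X}\cup\mathcal{Y}$ is then computable in $f(n)$ time via maximum bipartite matching, and the final answer is the best outcome over all guesses.

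The step that needs care is verifying that $\mathcal{C}\subseteq\Psi\cup\mathcal{X}\cup\mathcal{Y}$. For a center $c\in\mathcal{C}_i$ lying on the $j$th input plane, the xz-projection of $c$ lies on the segment between the xz-projections of $a^j_i$ and $b^j_i$, because the plane is perpendicular to the xz-plane (forcing all three projections to be collinear) and $a^j_i,b^j_i$ are the $x$-extremes on that plane. Hence the projection of $c$ lies in the xz-convex hull of the guessed type-$i$ centers, so a vertical $\pm y$ ray from $c$ meets the 3D convex hull of the guessed points; consequently $c$ lies either above the lower envelope or below the upper envelope, i.e.\ $c\in U_i\cup\overline{U}_i$. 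Since $c$ is also adjacent to every ball of $\Psi$, we obtain $c\in\mathcal{X}_i\cup\mathcal{Y}_i$, as required.

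For the running time, constructing the $\le 2r$-point extended envelopes, classifying each of the $n$ candidate centers with respect to $U_i$ and $\overline{U}_i$, and testing intersection with the $O(rk)$ balls of $\Psi$ costs $O(n^2r)$ per guess; assembling the complement bipartite graph takes $O(n^2)$; and bipartite matching takes $f(n)$. Multiplying by the $2^k n^{2rk}$ guesses yields the claimed $O(2^k n^{2rk}(f(n)+n^2r))$ bound. The main obstacle is the geometric sandwiching argument above: one must exploit precisely the perpendicularity of the input planes to a common plane so that only two extreme centers per plane suffice to trap every $\mathcal{C}_i$-center between the two extended envelopes; without this hypothesis, the number of convex-hull vertices (and therefore the exponent in the running time) could blow up.
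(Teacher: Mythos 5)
Your proposal takes essentially the same route as the paper: you guess which of the $k$ radius classes appear ($2^k$ factor), guess the $\le 2r$ extreme centers per class that form the xz-convex-hull vertices ($O(n^{2rk})$ tuples), define $\mathcal{X}_i$ and $\mathcal{Y}_i$ via the extended lower and upper envelopes, invoke Lemma~\ref{lem:polymagic} (and its mirror) to conclude $\mathcal{X}$ and $\mathcal{Y}$ are cliques, and finish with bipartite matching on the co-bipartite complement. Your explicit argument that every $c\in\mathcal{C}_i$ lands in $U_i\cup\overline{U}_i$ (because perpendicularity to the common plane makes each plane's projections collinear, so two extremes per plane trap all of $\mathcal{C}_i$ in the xz-convex hull) is a welcome elaboration of a step the paper leaves implicit; the one small imprecision is calling $a^j_i,b^j_i$ the ``$x$-extremes,'' which would be ambiguous if the $j$th plane projects to a line parallel to the $z$-axis, but the fix (extremes along the projected line) is cosmetic and does not affect correctness.
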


\section{Conclusion}

Given the simplicity of our algorithms, one may wonder why obtaining a positive result appeared to be challenging even for the case when we have only two different types of disks.  The existing lens-based algorithms for the unit-disk case act as a natural motivation for designing lens-based regions for the two radii case to obtain a cobiparte graph, where a maximum clique can be obtained in polynomial time.  We first present the challenges that appear in a lens-based analysis and then discuss directions for future research.

\subsection{The Deception of the Lens}

We refer the reader to Figure~\ref{fig:lbc} for an example of a lens-based attempt where the radii of the small and big disks are denoted by $r_s$ and $r_b$, respectively.  The two blue disks $D_{p,r_s}$ and $D_{p,r_s}$ in Figure~\ref{fig:lbc}(a) are guessed to be the farthest pair of small disks in a maximum clique. The green lens $L_b$ bounds how far the disk centers of the big disks can be considering that they have to intersect both $D_{p,r_s}$ and $D_{q,r_s}$. The line $\ell_{pq}$ does split the blue lens $L_s$ such that the small disks in the upper- and lower-half of $L_s$ are mutually adjacent, but the same does not hold for larger disks in $L_b$.

One can take this a step further by guessing more disks in the clique, for example, $D_{t,r_b}$ is a big disk that maximizes the distance $\max\{|pt|,|qt|\}$. Figure~\ref{fig:lbc}(b) shows that the upper half of $L_s$ could be refined further based on $D_{t,r_b}$, i.e., the red circular arc bounds how far the small disks can be as they should intersect $D_{t,r_b}$. The corresponding region is shaded in gray and the set of corresponding small disks are named slice-A. The bottom half of $L_s$ contains the rest of the small disks and this set is named slice-C. %slice-B. 

Consider now the line $\ell_{qt}$ Figure~\ref{fig:lbc}(c), which is a new candidate that one can examine to split big disks inside $L_b$. One can further refine $L_b$ based on where the big disks can be considering they must intersect  $D_{t,r_b}$ and $D_{q,r_s}$. The split regions determined by $\ell_{qt}$ are shaded in gray and green in  Figure~\ref{fig:lbc}(d), and the corresponding sets of big disks are named slice-B and slice-D, respectively. One can prove (with some non-trivial observations) that the disks in each individual slice are mutually adjacent, and that the disks in slice-A are adjacent to the disks in slice-B. However, to obtain a cobipartite graph we also need all disks in slice-C to be adjacent to all disks in slice-D. However, we can construct examples where a small disk in slice-C may not intersect a big disk in slice-D. One can keep adding additional guesses for disks in the maximum clique but it is challenging to formulate a cobipartite graph unless we have a right split.

\begin{figure}[pt]
    \centering
    \includegraphics[width=\textwidth]{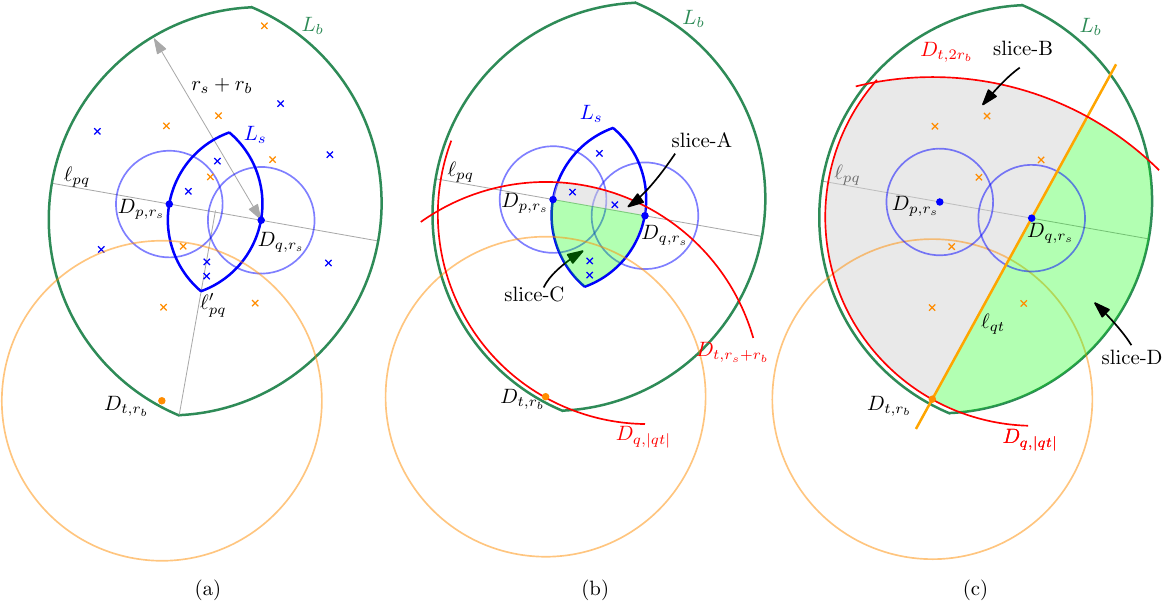}
    \caption{Lens-based regions designed for the case of two radii types.}
    \label{fig:lbc}
\end{figure}

\subsection{Direction for Future Research}
In this paper we gave an $O(n^{2k} poly(n))$-time algorithm to find a maximum clique in a $n$-vertex disk graph with $k$ different radii, and an $O(n^{2rk} poly(n,r))$-time algorithm  to find a maximum clique in a $n$-vertex ball graph with $k$ different radii where the ball centers are contained in $r$ planes that are perpendicular to a single different plane. Designing faster algorithms would be a natural direction to explore. One may   attempt to remove the perpendicularity  constraint for the case of ball graphs. The lower bound on the time complexity is another avenue to explore. The NP-hardness reduction for computing a maximum clique in a ball graph~\cite{DBLP:conf/focs/BonamyBBCT18} allows for arbitrary radii whereas our result provides polynomial-time algorithms when $r\in O(1)$ and $k\in O(1)$. Therefore, it would be interesting to examine cases when $k\in o(n)$.  

We also showed that given a set of $n$ unit disks, one can compute a maximum clique for each possible axis-aligned rectangle determined by the input disk centers in $O(n^5\log n)$ time. Consequently, given a rectangular range query $R$, the maximum clique $C$ inside $R$ can be reported in $O(\log \log n+|C|)$ time, where the $O(\log \log n)$ term is to locate the rectangle $R'$ (using range searching data structures~\cite{DBLP:conf/soda/Nekrich21}) for which a solution is precomputed. Improving the $O(n^5\log n)$ running time or establishing tight time-space trade-offs can be interesting. 

One can also examine time-space trade-offs of rectangle or disk queries of bounded size, even in the unit disk case. For example, if a query disk is of radius 1 unit, then the problem of computing a clique in the query region becomes equivalent to reporting the number of disk centers inside the query disk~\cite{rosen1999handbook}. Supporting disk queries with disks of radius $(1+\epsilon)$, where $\epsilon>0$, becomes more challenging. In fact, no polynomial-time algorithm is known for finding a maximum clique in a disk graph of arbitrary radii; even when the radii are in the interval $[1,1+\epsilon]$ and the disk centers are within a disk of radius $(1+\epsilon)$, where $\epsilon>0$. 

\section*{Acknowledgements}
The work is supported in part by the Natural Sciences and Engineering Research Council of Canada (NSERC). We thank Soichiro Yamazaki for useful feedback.

\bibliographystyle{abbrv}
\bibliography{ref}

\end{document}